\documentclass[prl,amsmath,amssymb, twocolumn,superscriptaddress, 11pt,tightenlines,aps]{revtex4-1}
\bibliographystyle{apsrev4-1}

\usepackage{graphicx}
\usepackage{dcolumn}   
\usepackage{bm}        
\usepackage{bbm}
\usepackage{dsfont}
\usepackage{mathtools}
\usepackage{amsthm, thmtools}
\usepackage{amsmath}
\usepackage{empheq}
\usepackage{enumerate}

\newcommand{\tinyspace}{\mspace{1mu}}
\newcommand{\microspace}{\mspace{0.5mu}}

\newcommand{\norm}[1]{\left\lVert\tinyspace#1\tinyspace\right\rVert}

\declaretheorem[
]{proposition}

\def\<{\langle}
\def\>{\rangle}
\def \lket {\left|}
\def \rket {\right\rangle}
\def \lbra {\left\langle}
\def \rbra {\right|}
\newcommand{\ket}[1]{\lket\microspace #1 \microspace\rket}
\newcommand{\oket}[1]{\lket\microspace #1 \microspace\right )}
\newcommand{\bra}[1]{\lbra\microspace #1 \microspace\rbra}
\newcommand{\obra}[1]{\left (\microspace #1 \microspace\rbra}

\newcommand{\Tr}{\mathrm{Tr}}

\newcommand{\abs}[1]{\left | #1 \right|}

\newcommand{\sld}{\text{\tiny{\textup{SLD}}}}
\newcommand{\rld}{\text{\tiny{\textup{RLD}}}}

\usepackage[usenames,dvipsnames]{xcolor}

\usepackage{bm}

\usepackage[colorlinks=true,citecolor=Cerulean,linkcolor=RubineRed,urlcolor=Cerulean]{hyperref}
\usepackage[capitalize]{cleveref}

\begin{document}
\title{Generalized geometric speed limits for quantum observables}
\author{Jacob Bringewatt}
\affiliation{Department of Physics, Harvard University, Cambridge, MA 02138 USA}
\affiliation{Joint Center for Quantum Information and Computer Science, NIST/University of Maryland, College Park, Maryland 20742, USA}
\affiliation{Joint Quantum Institute, NIST/University of Maryland, College Park, Maryland 20742, USA}
\author{Zach Steffen}
\affiliation{Maryland Quantum Materials Center, University of Maryland, College Park, MD 20742, USA}
\affiliation{Laboratory for Physical Sciences, 8050 Greenmead Drive, College Park, MD 20740, USA}
\author{Martin A. Ritter}
\affiliation{Joint Quantum Institute, NIST/University of Maryland, College Park, Maryland 20742, USA}
\author{Adam Ehrenberg}
\affiliation{Joint Center for Quantum Information and Computer Science, NIST/University of Maryland, College Park, Maryland 20742, USA}
\affiliation{Joint Quantum Institute, NIST/University of Maryland, College Park, Maryland 20742, USA}
\author{Haozhi Wang}
\affiliation{Maryland Quantum Materials Center, University of Maryland, College Park, MD 20742, USA}
\affiliation{Laboratory for Physical Sciences, 8050 Greenmead Drive, College Park, MD 20740, USA}
\author{B. S. Palmer}
\affiliation{Maryland Quantum Materials Center, University of Maryland, College Park, MD 20742, USA}
\affiliation{Laboratory for Physical Sciences, 8050 Greenmead Drive, College Park, MD 20740, USA}
\author{Alicia J. Koll\'ar}
\affiliation{Joint Quantum Institute, NIST/University of Maryland, College Park, Maryland 20742, USA}
\author{Alexey V. Gorshkov}
\affiliation{Joint Center for Quantum Information and Computer Science, NIST/University of Maryland, College Park, Maryland 20742, USA}
\affiliation{Joint Quantum Institute, NIST/University of Maryland, College Park, Maryland 20742, USA}
\author{Luis Pedro Garc\'ia-Pintos}
\affiliation{Theoretical Division (T-4), Los Alamos National Laboratory, New Mexico, 87545, USA}

\date{\today}
\begin{abstract}
Leveraging quantum information geometry, we derive generalized quantum speed limits on the rate of change of the expectation values of observables. These bounds subsume and, for Hilbert space dimension $\geq 3$, tighten existing bounds---in some cases by an arbitrarily large multiplicative constant. The generalized bounds can be used to design ``fast'' Hamiltonians that enable the rapid driving of the expectation values of observables with potential applications e.g.~to quantum annealing, optimal control,  variational quantum algorithms, and quantum sensing. Our theoretical results are supported by illustrative examples and an experimental demonstration using a superconducting qutrit. Possibly of independent interest, along the way to one of our bounds we derive a novel upper bound on the generalized quantum Fisher information with respect to time (including the standard symmetric logarithmic derivative quantum Fisher information) for unitary dynamics in terms of the variance of the associated Hamiltonian and the condition number of the density matrix.
\end{abstract}
\maketitle

\textit{Introduction.---}Energy-time uncertainty relations and the associated quantum speed limits were first formalized by Mandelstam and Tamm~\cite{mandelstam1945uncertainty}. 
They provided a lower bound for the time $t^\perp$ for a quantum system in a pure state to reach an orthogonal state, given unitary evolution under some Hamiltonian.
Since then, a variety of other bounds, all under the general heading of ``quantum speed limits,'' have been derived, providing bounds on the rate of change of quantum states~\cite{margolus1998maximum,PhysRevLett.110.050403,
PhysRevLett.110.050402,
PhysRevLett.111.010402,
pires2016generalized,campaioli2018tightening,campaioli2019tight,oconnor2020action,vanvu2023topological,mai2023tight} and observables~\cite{lieb1972finite,lgp2022unifying,Carabba2022quantumspeedlimits,mohan2022quantum,H_rnedal_2022}, with applications to metrology~\cite{maleki2023speed,herb2024quantum}, quantum thermodynamics~\cite{campbell2018precision,campaioli2017enhancing,
PhysRevLett.125.040601,funo2019speed,das2021thermo,vanvu2023thermo}, quantum control theory~\cite{caneva2009optimal}, the analysis of quantum algorithms~\cite{lgp2023lower} and many others~\cite{deffner2017quantum}.

Of particular note, Ref.~\cite{lgp2022unifying} derived a set of information-theoretic speed limits on the rate of change of expectation values of observables that hold for any probability-conserving dynamics.
For a fixed observable $A$ and a state $\rho$, 
\begin{equation}\label{eq:lgpbound}
\abs{\dot a}\leq \Delta A \sqrt{\mathcal{I}^\sld},
\end{equation}
where $\dot a:=\Tr\left[A \dot\rho\right]\,$ (here, $\dot\rho:=d\rho/dt$) is the velocity of the expectation value of $A$ in the state $\rho$, $(\Delta A)^2:=\mathrm{Tr}(\rho A^2)-\mathrm{Tr}(\rho A)^2$ is the variance of $A$, and $\mathcal{I}^\sld$ is the so-called symmetric logarithmic derivative (SLD) quantum Fisher information (QFI) with respect to time. 
The SLD QFI has a simple interpretation: it quantifies how much the state $\rho$ changes as 
$t\rightarrow t + dt$.

\Cref{eq:lgpbound} consists of two components: (1) a dynamics-independent term, $\Delta A$, that depends on the uncertainty of the observable in question with respect to the current state $\rho$; (2) an observable-independent term, $\sqrt{\mathcal{I}^\sld}$ that depends only on the underlying state and the dynamics of the system.
Both of these terms have geometric content, as they can each be associated with a natural notion of an inner product between Hermitian operators. In particular, considering $\rho$ as a point in a one-dimensional manifold of quantum states parameterized by the time $t$, 
$\mathcal{I}^\sld$  is a natural choice of
Riemannian metric on the space of density matrices
~\cite{petz1996monotone}. It is a metric in
that it provides a notion of distance between each point $\rho(t)$ and $\rho(t+dt)$.

Stronger versions of \cref{eq:lgpbound} can be obtained by splitting $A$ and $\mathcal{I}^{\sld}$ into their incoherent (diagonal) and coherent components (off-diagonal) in the eigenbasis of $\rho$~\cite{lgp2022unifying}. In particular,
\begin{equation}\label{eq:lgp-upper}
|\dot a|\leq (\Delta A_C)\sqrt{\mathcal{I}_C^\sld}+(\Delta A_I) \sqrt{\mathcal{I}_I^\sld},
\end{equation}
where, the subscripts $C$ and $I$ denote the coherent and incoherent parts, respectively. We have not yet defined precisely what the incoherent and coherent parts of $\mathcal{I}^\sld$ are ($\mathcal{I}^\sld$ is a number, not an operator!), but, qualitatively, $\mathcal{I}^\sld_C$ ($\mathcal{I}^\sld_I$) quantifies 
how much the eigenvectors (eigenvalues) of $\rho$ change as $t\rightarrow t+dt$.

\textit{Key Results.---}Before delving into the mathematical details, we summarize the key results:
First, we show that the bounds in \cref{eq:lgpbound} and \cref{eq:lgp-upper} are each but one example in a much richer family of geometric quantum speed limits. In particular, we leverage the fact that the choice of inner product on the (tangent space of the) manifold of quantum states that leads to $\mathcal{I}^\sld$ and $\Delta A$ is not unique~\cite{petz1996monotone,petz2002covariance}; thus, both the SLD QFI and the variance can be replaced with generalized quantities in \cref{eq:lgpbound,eq:lgp-upper}. 
Subject to the natural constraint that the associated Reimannian metric (i.e.~QFI) is contractive (i.e.~distances shrink under noise), these generalized notions of QFI and of variance are characterized by a special set of 
functions $f$ (defined rigorously below). We demonstrate that, for any valid choice of $f$,
\begin{align}
&|\dot a|
\leq (\Delta^f A)\sqrt{\mathcal{I}^f},\label{eq:new-bnd}
\end{align}
and, also, 
\begin{equation}\label{eq:new-upper}
|\dot a|\leq (\Delta^f A_C)\sqrt{\mathcal{I}_C^f}+(\Delta A_I) \sqrt{\mathcal{I}_I}.
\end{equation}
We dropped the $f$-superscript on the incoherent terms because, for incoherent operators, the generalized QFI and variance are identical for all $f$. 
We prove that \cref{eq:new-upper} is 
tighter than \cref{eq:new-bnd}. 

\cref{eq:lgpbound,eq:lgp-upper} correspond to one particular choice of $f$.
For Hilbert space dimension $\geq 3$, we show that, for the optimal choice of $f$, \cref{eq:new-bnd,eq:new-upper} are generically tighter than \cref{eq:lgpbound,eq:lgp-upper}, respectively,
in some cases by an arbitrarily large multiplicative factor. These improvements come from taking advantage of the freedom in choosing a natural inner product on the manifold of quantum states to tighten the slack in the Cauchy-Schwarz inequality that underlies such bounds. Our results are supported by a toy example and an experimental demonstration using a superconducting qutrit.
We also show how our bounds inform the construction of novel control Hamiltonians to rapidly drive observables with applications to, e.g., quantum annealing and quantum machine learning. 

Finally, for arbitrary coherent dynamics driven by a Hamiltonian $H$
and assuming incoherent dynamics driven by entanglement with an environment via a Hamiltonian $H^\mathrm{int}$, we obtain a looser, but simpler, speed limit
\begin{equation}\label{eq:variance-bnd}
|\dot a|\leq \sqrt{\kappa_\rho} \Delta^f A_C\Delta H+2\Delta A_I\Delta H^\mathrm{int},
\end{equation}
where $\kappa_\rho$ is the condition number of $\rho$ (i.e. the ratio of its largest eigenvalue to its smallest eigenvalue). Importantly, unlike $\mathcal{I}^f$, this bound does not depend directly on $\dot\rho$ and is expressed in terms of more easily accessible physical quantities. For $f = f_\sld$ and mixed states with $\kappa_\rho < 4$, \cref{eq:variance-bnd} tightens the original speed limit on observables derived by Mandelstam and Tamm~\cite{mandelstam1945uncertainty}. Along the way to \cref{eq:variance-bnd}, we derive a new bound on any QFI that
may be of independent interest.

\textit{Mathematical Details.---}Let $\mathcal{M}_d$ denote the set of $d$-dimensional Hermitian matrices. Define the space of density operators $\mathcal{D}:=\{\rho\in\mathcal{M}_d\,|\,\Tr\,\rho=1, \rho\geq 0\}$ over the $d$-dimensional Hilbert space $\mathcal{H}_d$. Restricting our attention to positive-definite $\rho\in\mathcal{D}$,
we have  
a Riemannian manifold~\footnote{In this paper, we only consider positive-definite states. A subset of the metrics considered here admit an extension to pure states, but, here, they all correspond to the Fubini-Study metric~\cite{petz1996geometries}, so generalized geometric speed limits for pure states are uninteresting. There is a large family of non-positive definite, non-pure states, but, to our knowledge, metrics on this full topological boundary have not been considered in the literature.}, i.e.~we can associate a Riemannian metric $g_\rho$ with each point $\rho$ on the manifold of positive-definite states, providing a notion of infinitesimal distance between two full rank states $\rho$ and $\rho+d\rho$. 

It is natural to require that $g$ be contractive under noise, 
i.e.~$g_{T\rho}(TA, TB)\leq g_\rho(A,B)$, where $T$ is a completely positive trace-preserving map and $A,B$ are elements of the tangent space at $\rho$ (traceless Hermitian matrices of dimension $d$)~\footnote{While monotonicity is a natural requirement for a metric, one can define geometric speed limits without this requirement~\cite{campaioli2019tight}.}. 
As alluded to above, all such metrics can be identified with a function $f:\mathbb{R}^+\rightarrow\mathbb{R}^+$ known as a symmetric, normalized operator monotone function~\cite{petz1996monotone}.
These functions $f$ obey:
(i) continuity; (ii) for matrices $A\geq B> 0$, $f(A)\geq f(B)> 0$ (operator monotonicity); (iii) $x f(x^{-1})=f(x)$, for $x> 0$ (symmetry); (iv) $f(1)=1$ (normalization). Call the set of functions satisfying these properties $\mathbb{F}$.

These functions are important because they lead to a generalized mean $m^f(x, y):=xf(x^{-1}y)$ of a pair of either positive real numbers or positive definite, commuting (super-) operators~\footnote{The definition can be adapted slightly to allow for non-commuting operators, but as we do not need this generalization we keep the mean in its simplest form.}~\cite{kubo1980means}.
The properties of $\mathbb{F}$ imply various natural properties for the mean $m^f(x,y)$, e.g., $m^f(x,x)=x$ and $m^f(x,y)=m^f(y,x)$~\cite{SM}.

In the context of quantum speed limits, consider $A=B=\dot\rho$; the distance $\mathcal{I}^f=4g_\rho^f\big(\dot\rho,\dot\rho\big)$ is called the generalized QFI with respect to time for the state $\rho$~\footnote{The normalization factor of $1/4$ in the definition of $g_\rho$ is the conventional choice as it makes $g_\rho^f$ coincide with the Euclidean metric restricted to the positive orthant of the unit sphere defined in the space of eigenvalues of $\rho$.}. In particular~\cite{petz1996geometries,gibilisco2024f},
\begin{align}\label{eq:gen-qfi}
\mathcal{I}^f&:=
\mathrm{Tr}\left[\dot\rho\, m^f(L_\rho,R_\rho)^{-1}(\dot\rho)\right],
\end{align}
where $L_\rho(\cdot):=\rho(\cdot)$, $R_\rho(\cdot):=(\cdot)\rho$, and $m^f(L_\rho,R_\rho)$ are positive definite super-operators and, thus, invertible.

We will consider a particular one-parameter set of functions $\mathbb{F}_\beta\subset \mathbb{F}$~\cite{gibilisco2004characterisation}. Two important elements are
$f_{\beta=1}(x)=(1+x)/2$, corresponding to the standard SLD QFI, and
$f_{\beta=-1}(x)=2x/(1+x)$,
corresponding to the right logarithmic derivative (RLD) QFI. 
The associated mean $m^f$ is largest (smallest) over all $\mathbb{F}$ for $f_\sld$ ($f_\rld$), where it corresponds to the arithmetic (harmonic) mean. See the supplemental material for a full mathematical description of $\mathbb{F}_\beta$~\cite{SM}.

We can also define a generalized variance as
\begin{equation}\label{eq:gen-var}
(\Delta^fA)^2:=\Tr\left[A_0\, m^f(L_\rho, R_\rho) (A_0)\right],
\end{equation}
where $A$ is Hermitian and $A_0:=A-\Tr(\rho A)$~\cite{petz2002covariance}. For $f=f_\sld$, \cref{eq:gen-var} is the usual variance.

Using these definitions, it is straightforward to derive \cref{eq:new-bnd,eq:new-upper}. To begin, define an $f$-dependent ``logarithmic derivative'' operator $
\mathcal{L}^f:=m^f(L_\rho,R_\rho)^{-1}(\dot\rho)$. Note that $\mathcal{I}^f=(\Delta\mathcal{L}^f)^2$. Using that $L_\rho$ and $R_\rho$ are positive definite, $|\dot a|=\big|\Tr[\dot\rho A]\big|=\big|\Tr[\dot\rho A_0]\big|=\big|\Tr[m^f(L_\rho,R_\rho)(\mathcal{L}^f)A_0]\big|$. Applying
the Cauchy-Schwarz inequality gives \cref{eq:new-bnd}.

Splitting $A$ and $\mathcal{L}^f$ into coherent and incoherent parts in the eigenbasis of $\rho=\sum_j p_j \ket{j}\bra{j}$, a similar derivation~\cite{SM} yields the bound in \cref{eq:new-upper}.
Furthermore, 
one finds that \cref{eq:new-upper} is generically tighter than \cref{eq:new-bnd}, coinciding if and only if $(\Delta^fA_C)\sqrt{\mathcal{I}_I}=(\Delta A_I)\sqrt{\mathcal{I}^f_C}$~\cite{SM}.

\textit{Bound Comparison.---}While the new speed limits specified by \cref{eq:new-bnd,eq:new-upper} are of independent interest from the perspective of information geometry, to be of broader utility,  there must exist triplets $(\rho,\dot\rho,A)$ where the tightest bound corresponds to $f\neq f_\sld$,  so that the new bounds are tighter than those of Ref.~\cite{lgp2022unifying}. It is not immediately clear if this is possible: on one hand, $\mathcal{I}^f\geq \mathcal{I}^\sld$ for all $f\in\mathbb{F}$; on the other,
$\Delta^f A\leq \Delta^\sld A$ for all $f\in\mathbb{F}$.
Tighter bounds depend on the tightening of the variance term winning out over the loosening of the QFI term.

As the incoherent term in 
\cref{eq:new-upper}
is identical for all $f$, to find improved bounds it is sufficient to consider only the coherent term.
To this end, define the coherent ratio
$\xi^f:=\frac{(\Delta^fA_C)^2\mathcal{I}^f_C}{(\Delta^\sld A_C)^2\mathcal{I}^\sld_C}.$
If $\xi^f<1$, the generalized bound \cref{eq:new-upper} is tighter than the case with $f=f_\sld$ derived in Ref.~\cite{lgp2022unifying}. 

For a qubit, it is straightforward to show that
no improvement is possible from the generalized bounds~\cite{SM}. However, moving to larger systems, a qutrit ($d=3$) is already sufficient to demonstrate two facts: (1) there are triplets $(\rho, \dot\rho, A)$ such that the generalized bounds with $f\neq f_\sld$ are tightest; (2) for every $f_{\beta}\in\mathbb{F}_\beta$, there exists $(\rho,\dot\rho,A)$ such that the tightest bound, optimized over this family, is $f_{\beta}$.

\textit{Example.} Consider a qutrit. In the eigenbasis of $\rho=\sum_j p_j \ket{j}\bra{j}$, let 
$A_{02}=A_{12}=0$,
$v_{ij}:=2|(\dot \rho)_{ij}|^2$, and $m^f_{ij}:=m^f(p_i,p_j)$. 
Then~\cite{SM}
\begin{equation}\label{eq:qutrit-comp}
\xi^f=\frac{v_{01}+v_{02}\Big(\frac{m^f_{01}}{m^f_{02}}\Big)+v_{12}\Big(\frac{m^f_{01}}{m^f_{12}}\Big)}{v_{01}+v_{02}\Big(\frac{m^\sld_{01}}{m^\sld_{02}}\Big)+v_{12}\Big(\frac{m^\sld_{01}}{m^\sld_{12}}\Big)}.
\end{equation}

Consider evolution under a Hamiltonian $H=i(\Omega/2)(\ket{g}\bra{e}+\ket{e}\bra{f})+h.c.$, 
where $\ket{g},\ket{e},\ket{f}$ are 
the computational basis states. For simplicity, consider $\rho$ diagonal in the computational basis, so that we can identify the computational basis with the eigenbasis 
of $\rho$. 
Thus, $v_{02}=0$, 
so we can ignore the middle term in both the numerator and denominator of \cref{eq:qutrit-comp} and turn our attention to the third terms. Using that $m^\sld_{ij}\geq m^f_{ij}$ $\forall$ $f$, $\xi^f<1$ if $p_1\approx p_2$ (so that $m^\sld_{12}\approx m^f_{12}$, as $m^f(x,x+\epsilon)\approx x$ for all $f$ if $\epsilon\ll 1$); and, also, $|p_0-p_1|$ is sufficiently large (so that $m^\sld_{01}\gg m^f_{01}$).

This qualitative description is made precise in \cref{fig:example-1}, where we minimize $\xi^f$ over $f_\beta\in\mathbb{F}_\beta$ for equally-spaced 
grid points $(p_0,p_1,p_2)$ on the probability simplex 
with $p_j\in[0.0025,0.995]$ $\forall\, j$~\cite{code}. There are large regions of parameter space where the generalized bounds are tighter than existing bounds. Here, the optimal $f_\beta$ is $f_{-1}=f_\rld$. 
For $p_0=\epsilon^3, p_1=\epsilon, p_2=1-p_0-p_1$, we have $\xi^f = \mathcal{O}(\epsilon)$, 
indicating that arbitrarily large multiplicative improvements over existing bounds are possible by taking $\epsilon \rightarrow 0$~\cite{SM}.

This example is also well-suited for a proof-of-principle experimental demonstration. Leaving the details to the supplemental material~\cite{SM},  we prepare a superconducting qutrit in a diagonal mixed state $\rho$ by letting the qutrit naturally decay for a time $t_{\mathrm{decay}}$ from the $\ket{f}$ state.
Driving both the $\ket{g}\leftrightarrow\ket{e}$ and $\ket{e}\leftrightarrow\ket{f}$ transitions, 
we extract the speed $|\dot a|$
of the Pauli-$X$ observable in the $\{\ket{g}, \ket{e}\}$ subspace from estimates of the rate of change of the populations of these states. 
We repeat this for many initial states prepared via the natural decay of the qutrit and compare the measured speeds to the  $f_\sld-$ and $f_\rld-$based bounds, computed directly from the experimentally-determined states at 10$\mu$s intervals along the decay trajectory,
as shown in the inset of \cref{fig:example-1}. While both bounds constrain the measured speeds within experimental error for small $t_\mathrm{decay}$, the new $f_\rld$-based bound is significantly tighter than the $f_\sld$-based bound. 

The problem becomes richer if $v_{01}, v_{02}, v_{12}\neq 0$. 
Here, the optimal choice of $f_\beta\in\mathbb{F}_\beta$ can occur for any $\beta\in[-1,1]$ as one moves through the parameter space, demonstrating fact (2) above~\cite{SM}.

\begin{figure}[t]
\centering
\includegraphics[width=\columnwidth]{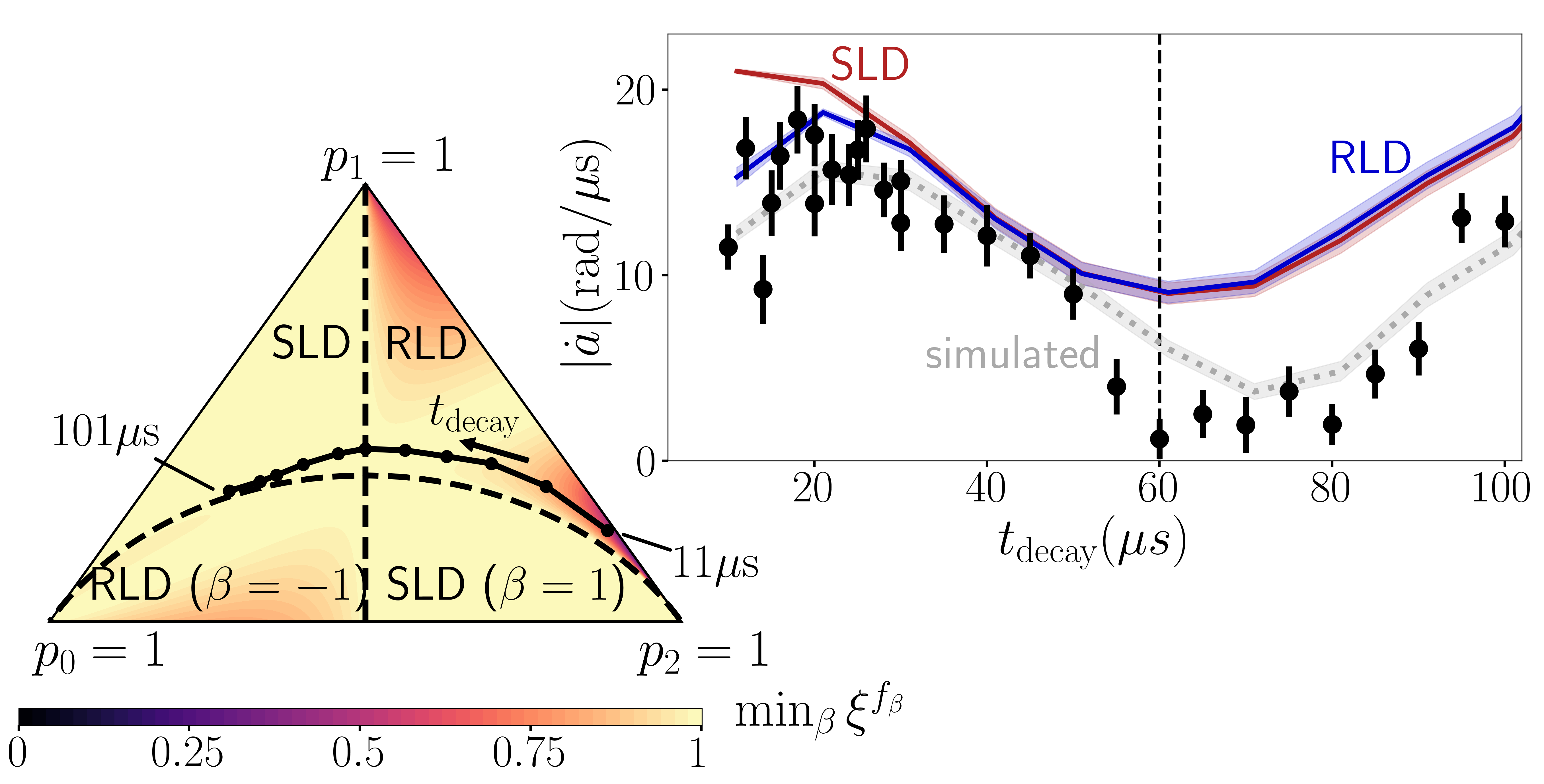}
\caption{For the qutrit example described in the text,
$\xi^f$ for the optimum $f_\beta\in\mathbb{F}_\beta$ as a function of the eigenvalues $p_0, p_1,p_2$ of the diagonal state $\rho$, plotted using barycentric coordinates. $\xi^f < 1$ corresponds to the new bounds being tighter. In these regions, $\xi^f$ is minimized for $f_{\beta = -1}=f_\rld$. The dashed lines in the main plot and the inset separate these regions from where there is no improvement over the SLD-based bounds. The states used in the experimental demonstration are prepared by letting the qutrit decay from $\ket{f}$ for a time $t_\mathrm{decay}\in [11,101]\, \mu s$ leading to the one-parameter family of states depicted with the black curve on the probability simplex. 
In the inset, the experimentally extracted value for $|\dot a|$ driving at a Rabi frequency of $\Omega=(2\pi) 10$ MHz for a number of initial states along this trajectory is compared to the RLD and SLD bounds computed directly from experimentally-determined states. The dotted gray line shows the simulated speed for these same states. Error bands are the standard deviation of the mean over 21 preparations of the states. For small $t_\mathrm{decay}$, the new $f_\rld$-based bound is significantly tighter than the old $f_\sld$-based bound.
}\label{fig:example-1}
\end{figure}

{\textit{Saturation and ``Fast'' Hamiltonians.---}}As the upper bounds come from
the Cauchy-Schwarz inequality, \cref{eq:new-bnd,eq:new-upper} are tight when $A_0\propto \mathcal{L}^f$. 
In the eigenbasis of $\rho$, this corresponds to the condition $(A_0)_{ij}=\gamma\frac{\dot\rho_{ij}}{m^f(p_i,p_j)}$ for some constant $\gamma$. 
In the example above, $A$ has only a single pair of non-zero components $A_{ij}, A_{ji}$; thus if the bounds are tight for any $f$, they are tight for all $f$.
In general, however, the saturation of the speed limits is $f$-dependent. In fact, one can construct examples where the SLD-based bound is loose, while, with an appropriate choice of $f$, the generalized bounds are tight~\cite{SM}.

A natural question immediately follows: 
what choice of dynamics leading to the saturation of \cref{eq:new-upper} (determined by a choice of $f\in\mathbb{F}$) maximizes $|\dot a|$? 
As the $f$-dependence of the answer relies only on the coherent part of \cref{eq:new-upper}, we restrict our attention to Hamiltonian dynamics to emphasize the role of $f$. For the appropriate choice of  Hamiltonian $H^{f,\mathrm{fast}}_{\rho,A}$, it holds that $|\dot a|=(\Delta^f A_C)\sqrt{\mathcal{I}^f_C}$. 
Using that $\dot\rho=-i[H,\rho]$, 
\begin{align}\label{eq:Hfast}
H^{f,\mathrm{fast}}_{\rho,A}=-\frac{i}{\gamma} \sum_{j\neq k}\frac{m^f(p_j,p_k)}{p_j-p_k}A_{jk}\ket{j}\!\bra{k}
\end{align}
for some constant $\gamma$~\footnote{Note that this gives us flexibility to fix the Hamiltonian norm.}, where we assume that $p_j\neq p_k$. 
The optimal $f$ that maximizes $|\dot a|$ over coherent-quantum-speed-limit-saturating dynamics will be state- and observable dependent. 
Importantly, using $H_{\rho, A}^{f, \mathrm{fast}}$ for $f\neq f_\sld$ can lead to larger $|\dot a|$ than $f_\sld$ alone. See supplemental material for an example~\cite{SM}.

Such Hamiltonians may be hard to implement for large systems, but they could guide the construction of alternative Hamiltonians built from a limited set of controls. This approach could help to design~\cite{magann2022feedback} and understand the performance~\cite{lgp2023lower} of quantum annealing algorithms or to provide improved versions of quantum natural gradient descent~\cite{stokes2020quantum} or, for quantum sensing, to determine optimal Hamiltonians with which to couple an unknown parameter, given a limited set of observables.

\textit{Bounds Using Energy Variance.---}It is desirable to also derive quantum speed limits in terms of more directly physical and easily accessible quantities than the generalized QFIs. Consider unitary dynamics driven via a Hamiltonian $H$ and assume any nonunitary dynamics are due to entanglement with an environment via a Hamiltonian $H^\mathrm{int}$ consisting of all terms with support on both the system and the environment. Then 
\begin{align}\label{eq:I-bnds}
\mathcal{I}_C^f\leq \kappa_\rho (\Delta H)^2, && \mathcal{I}_I\leq 4 (\Delta H^\mathrm{int})^2,
\end{align}
where $\kappa_\rho$ is the condition number of $\rho$ and $(\Delta H^\mathrm{int})^2$ is computed for the joint state of the system and the environment. The first bound is new~\cite{SM}; the second is from Ref.~\cite{lgp2022unifying}. Plugging \cref{eq:I-bnds} into \cref{eq:new-upper} yields \cref{eq:variance-bnd}.

As $\Delta^f A \leq \Delta A$ for all $f$, the tightest bound in \cref{eq:variance-bnd} will always correspond to $f=f_\rld$. However, even for the loosest choice $f=f_\sld$, if $\kappa_\rho<4$ then \cref{eq:variance-bnd} is tighter than existing similar bounds, given by \cref{eq:variance-bnd} with $\sqrt{\kappa_\rho}(\Delta^f A_C)\rightarrow 2(\Delta A_C)$~\cite{lgp2022unifying}. $\kappa_\rho<4$ holds for nearly fully mixed states; for instance, a thermal state $\rho\propto \exp(-\beta H)$ with $\beta \leq \log(4)/(2\norm{H}_s)$.  
See supplemental material for more discussion~\cite{SM}. 

\textit{Discussion.---}Leveraging the tools of quantum information geometry, we prove novel quantum speed limits for observables that are 
generically tighter than previously derived bounds~\cite{lgp2022unifying}.
The family of inner products we use 
have also been used to derive tighter bounds for multiparameter quantum metrology~\cite{lu2020generalized}, generalized uncertainty relations~\cite{gibilisco2008robertson,yanagi2011metric}, and quantum speed limits on the evolution of states~\cite{pires2016generalized,pires2024experimental}. 
We expect these inner products and the novel bounds in this work
can also be applied  
in other settings where the SLD QFI is used~\cite{meyer2021fisher}, e.g.,~for analyzing quantum machine learning models~\cite{abbas2021power,haug2021capacity}
or parameter estimation problems~\cite{garcia2024estimation}.

{\textit{Acknowledgements.---}}We thank Sean Muleady and Yu-Xin Wang for helpful conversations.
We appreciate support from a seed grant via the NSF Quantum Leap Challenge (QLCI) for Robust Quantum Simulation (RQS) (award No.~OMA-2120757).
J.B., A.E., and A.V.G.~were also supported in part by the DoE ASCR Quantum Testbed Pathfinder program (awards No.~DE-SC0019040 and No.~DE-SC0024220), DoE ASCR Accelerated Research in Quantum Computing program (award No.~DE-SC0020312), NSF STAQ program, AFOSR, AFOSR MURI, and DARPA SAVaNT ADVENT. Support is also acknowledged from the U.S.~Department of Energy, Office of Science, National Quantum Information Science Research Centers, Quantum Systems Accelerator. 
M.R. received support from the National Science Foundation (PFC at JQI Grant No.  PHY-1430094), the Laboratory for Physical Sciences Quantum Graduate Fellowship, and ARCS. 
L.P.G.P. acknowledges support by the DOE Office of Science, Office of Advanced Scientific Computing Research, Accelerated Research for Quantum Computing program, Fundamental Algorithmic Research for Quantum Computing (FAR-QC) project, Laboratory Directed Research and Development (LDRD) program of LANL under project number 20230049DR, and Beyond Moore’s Law project of the Advanced Simulation and Computing Program at LANL managed by Triad National Security, LLC, for the National Nuclear Security Administration of the U.S. DOE under contract 89233218CNA000001.

\bibliography{main}

\onecolumngrid

\newpage
\begin{center}
\textbf{Supplemental Material} 
\end{center}
\setcounter{secnumdepth}{1}
\renewcommand{\thesection}{S\arabic{section}}
\setcounter{theorem}{0}
\renewcommand{\thetheorem}{S\arabic{theorem}}
\setcounter{lemma}{0}
\renewcommand{\thelemma}{S\arabic{lemma}}
\setcounter{equation}{0}
\renewcommand{\theequation}{S\arabic{equation}}
\setcounter{table}{0}
\renewcommand{\thetable}{S\arabic{table}}
\setcounter{figure}{0}
\renewcommand{\thefigure}{S\arabic{figure}}

This supplemental material includes the following: a summary of key definitions and a detailed definition of the family $\mathbb{F}_\beta$ (Section S1), a description of how the properties of $f\in\mathbb{F}$ imply properties of the means $m^f(x,y)$ (Section S2), a proof of the coherent-incoherent bound and the implications for a qubit (Section S3), a demonstration that the generalized bounds can yield arbitrarily large multiplicative improvements (Section S4), experimental details (Section S5), an example showing that all $f_\beta\in\mathbb{F}_\beta$ can yield the tightest bound (Section S6), details on the saturation of bounds (Section S7), an example of ``fast'' Hamiltonians (Section S8) and details on the bounds in terms of energy variance (Section S9).

\section{Summary of Key Definitions}\label{app:definitions}

In this section, we provide, in \cref{tab:my_label}, a summary of the key quantities and their mathematical definitions. In these definitions, as in the main text, $A$ denotes a general Hermitian observable.

\begin{table}[ht]
    \centering
    \begin{tabular}{|c|c|}
    \hline
        Quantity & Mathematical Expression  \\\hline
         $\dot a$ & $\Tr(\dot\rho A)$ \\
         $(\Delta A)^2$ & $\Tr(\rho A^2)-\Tr(\rho A)^2$ \\
         $A_0$ & $A-\Tr(\rho A)$\\
         $m^f(x,y)$ & $xf(x^{-1}y)$\\
         $^*(\Delta^f A)^2$ & $\Tr\left[A_0\, m^f(L_\rho, R_\rho) (A_0)\right]$ \\
         $\mathcal{L}^f$ & $m^f(L_\rho, R_\rho)^{-1}(\dot\rho)$\\
         $\mathcal{I}^f$ & $\Tr[\dot\rho m^f(L_\rho, R_\rho)^{-1}(\dot\rho)]$\\
         & $\qquad\quad=\Tr[\mathcal{L}^f m^f(L_\rho, R_\rho)(\mathcal{L}^f)]$\\
         \hline
    \end{tabular}

    *equals $(\Delta A)^2$ for $f=f_\sld$.
    \caption{Definitions of key mathematical quantities.}
    \label{tab:my_label}
\end{table}

We also provide a complete mathematical description of the one-parameter family of operator monotone functions $\mathbb{F}_\beta$ used in the main text. In particular,
\begin{equation}\label{eq:fbeta}
f_\beta(x):=\begin{cases}
\frac{\beta(1-\beta)(x-1)^2}{(x^\beta-1)(x^{1-\beta}-1)}, & \beta\in\big[-1,\frac{1}{2}\big)\setminus \{0\}\\
\frac{x-1}{\log x}, & \beta = 0 \\
\left(\frac{1+x^\beta}{2}\right)^{1/\beta}, & \beta\in\big[\frac{1}{2},1\big].
\end{cases}
\end{equation}

Some important examples from this family and their associated means are summarized in \cref{tab:means}.

\begin{table}[ht]
    \centering
    \begin{tabular}{|c|c|c|c|c|}
    \hline
        Name & $\beta$ & $f(x)$ & $m^f(x,y)$ \\\hline
        SLD & $1$ & $\frac{1+x}{2}$ & $\frac{x+y}{2}$\\
        Wigner-Yanase & $\frac{1}{2}$ & $\frac{1+\sqrt{x}}{4}$ & $\frac{\sqrt{x}+\sqrt{y}}{4}$\\
        RLD & $-1$ & $\frac{2x}{1+x}$ & $\frac{2xy}{x+y}$\\
         \hline
    \end{tabular}
    \caption{Important examples of $f\in\mathbb{F}_\beta$ and the associated means.}
    \label{tab:means}
\end{table}

 \section{From Operator Monotone Functions to Means}\label{app:means}
 In this section, we review how the properties of the operator monotone functions $\mathbb{F}$ imply certain natural properties of the associated means. Recall that $m^f(x,y):=xf(x^{-1}y)$. Here, we consider $x,y\in\mathbb{R}^+$ or as commuting, positive definite (super-) operators. As described in the main text, the operator monotone functions $f\in\mathbb{F}$ have the following properties: 
 \begin{enumerate}[(i)]
 \item Continuity.
 \item For matrices $A\geq B> 0$, $f(A)\geq f(B)> 0$ (operator monotonicity). 
 \item $x f(x^{-1})=f(x)$, for $x> 0$ (symmetry).
 \item $f(1)=1$ (normalization).
 \end{enumerate}

 These properties of $f\in\mathbb{F}$ imply the following (non-exhaustive) list of properties for the associated means:
  \begin{enumerate}[(i')]
\item $m^f(x,y)$ is continuous.

Follows directly from property (i) and the definition of $m^f(x,y)$.
  
 \item $m^f(x,x)=x$.

\noindent Follows from property (iv):
\[
m^f(x,x)=xf(x^{-1}x)=xf(1)=x.
\]

 \item $m^f(x,y)=m^f(y,x)$.

\noindent Follows from property (iii): 
 \begin{align*}
     m^f(x,y)&=\frac{x}{xy^{-1}} f(y^{-1}x)\\
     &=yf(y^{-1}x)=m^f(y,x).
 \end{align*}

 \item $x\leq y$ $\implies $ $x\leq m^f(x,y)\leq y$.

\noindent Follows from properties (ii), (iv) and (iii'): 
 \begin{align*}
     m^f(x,y)&=x f(x^{-1}y)\geq x f(1)=x,\\
     m^f(x,y)&=y f(y^{-1}x)\leq y f(1)=y.
 \end{align*}

  \item $x\leq x'$, $y\leq y'$ $\implies $ $m^f(x,y)\leq m^f(x',y')$.

\noindent Follows from properties (ii), (iv), and (iii') almost identically to the demonstration of property (iv').
 \end{enumerate}

\section{Coherent-Incoherent Bound}\label{app:cross-terms}
In this section, we derive the split coherent-incoherent bound of \cref{eq:new-upper}, repeated here for convenience:
\begin{equation}\label{eq:new-upper-app}
|\dot a|\leq (\Delta^f A_C)\sqrt{\mathcal{I}_C^f}+(\Delta A_I) \sqrt{\mathcal{I}_I}.
\end{equation}

We also demonstrate how to express the quantities in this bound in the eigenbasis of $\rho=\sum_j p_j\ket{j}\bra{j}$ and use these expressions to show that the split coherent-incoherent bound in \cref{eq:new-upper-app} is generically tighter than the non-split bound in \cref{eq:new-bnd} of the main text.

\subsection{Proving the Coherent-Incoherent Bound}

To derive \cref{eq:new-upper-app}, note that 
\begin{align}\label{eq:split}
\dot a&=\Tr[m^f(L_\rho,R_\rho)(\mathcal{L}_I^f+\mathcal{L}^f_C)(A_{0,I}+A_{0,C})] \nonumber \\
&= \Tr[m^f(L_\rho,R_\rho)(\mathcal{L}_I^f)A_{0,I}] \nonumber\\
&\quad + \Tr[m^f(L_\rho,R_\rho)(\mathcal{L}^f_C)A_{0,C}],
\end{align}
where we use that the cross terms vanish under the trace. The subscripts $C$ and $I$ denote the coherent and incoherent parts of each operator in the eigenbasis of $\rho$, respectively. The bound in \cref{eq:new-upper-app} then follows immediately by applying the Cauchy-Schwarz inequality.

Therefore, it remains to explicitly show that the cross terms indeed vanish. That is, we must show that
\begin{align}\label{eq:cross-term-1}
\Tr[m^f(L_\rho,R_\rho)(\mathcal{L}^f_I)A_{0,C}]&=0, 
\\
\Tr[m^f(L_\rho,R_\rho)(\mathcal{L}^f_C)A_{0,I}]&=0. \label{eq:cross-term-2}
\end{align}
To show this we will use the matrix representation of the super-operators $L_\rho=\rho\otimes I$ and $R_\rho=I\otimes \rho^T$, which act on vectorized versions of the regular operators. That is, for an operator $O$, we denote its vectorized form in an arbitrary basis as
\begin{equation}
\oket{O}:=\sum_{i,j}O_{ij}\ket{i}\otimes\ket{j}=\sum_{i,j}O_{ij}\oket{ij},
\end{equation}
where we use the rounded ``ket'' $|\cdot )$ to denote vectors in the super-operator vector space.

Specifically, in the eigenbasis of $\rho=\sum_j p_j\ket{j}\bra{j}$, we can write $L_\rho=\sum_{ij}p_i\oket{ij}\obra{ij}$ and $R_\rho=\sum_{ij}p_j\oket{ij}\obra{ij}$. It is then easy to show, in this basis, that
\begin{align}\label{eq:mean-in-rho-basis}
m^f(L_\rho, R_\rho)&:=L_\rho f(L_\rho^{-1}R_\rho)\nonumber\\
&= \sum_{ij} p_if(p_i^{-1}p_j)\oket{ij}\obra{ij}\nonumber \\
&=\sum_{ij}m^f(p_i,p_j)\oket{ij}\obra{ij},
\end{align}
where we used that $m^f(p_i,p_j):=p_i f(p_i^{-1}p_j)$. Consequently, 
\begin{align}
&m^f(L_\rho, R_\rho)(\mathcal{L}_I^f)\nonumber\\
&\quad=\sum_{ij}m^f(p_i,p_j)\oket{ij}\obra{ij}\sum_{k}(\mathcal{L}_I^f)_{kk}\oket{kk}\nonumber\\
&\quad=\sum_{i}(\mathcal{L}_I^f)_{ii} m^f(p_i,p_i)\oket{ii}\nonumber\\
&\quad =\sum_{i}p_i(\mathcal{L}_I^f)_{ii} \oket{ii},
\end{align}
from which it it is easy to show \cref{eq:cross-term-1}, as $A_{0,C}$ is off-diagonal in this basis. Similar computations allow one to show \cref{eq:cross-term-2}.

\subsection{Generalized Quantities in State Eigenbasis}
For analysis of the bounds it is helpful to derive expressions for the elements of the split coherent-incoherent bound in \cref{eq:new-upper} (alternatively, \cref{eq:new-upper-app})  in the eigenbasis of $\rho=\sum_j p_j \ket{j}\bra{j}$. In particular, it holds that: 
\begin{align}
(\Delta^fA_C)^2&=\sum_{i\neq j} \big|(A_0)_{ij}\big|^2m^f(p_i,p_j),
\label{eq:var-in-rho-basis-C} \\
\mathcal{I}^f_C:=\left(\Delta\mathcal{L}^f_C\right)^2&=\sum_{i\neq j} \big|(\dot \rho)_{ij}\big|^2\frac{1}{m^f(p_i,p_j)},\label{eq:I-in-rho-basis-C}\\
(\Delta A_I)^2&=\sum_{i} p_i\big|(A_0)_{ii}\big|^2,
\label{eq:var-in-rho-basis-I}\\
\mathcal{I}_I:=\left(\Delta\mathcal{L}^f_I\right)^2&=\sum_j p_j \left(\frac{d \log p_j}{dt}\right)^2.\label{eq:fi}
\end{align}
Recall, we drop the $f$-superscript on the incoherent terms because, for incoherent operators, the generalized quantum Fisher information and variance are identical for all $f$~\footnote{Just use that $m^f(p_i,p_i)=p_i$ for all $f$}. Furthermore, $\mathcal{I}_I$ is simply the classical Fisher information of the probability distribution specified by the eigenvalues $\{p_j\}_{j=0}^{d-1}$ of $\rho$.

We now sketch the derivation for these expressions. Using \cref{eq:mean-in-rho-basis},
%
\begin{align}
 &m^f(L_\rho, R_\rho) (A_0)\nonumber \\
 &\qquad=\sum_{ij}m^f(p_i,p_j)\oket{ij}\obra{ij} \sum_{kl}(A_0)_{kl}\oket{kl}\nonumber\\
&\qquad=\sum_{ij}(A_0)_{ij}m^f(p_i,p_j)\oket{ij}.
\end{align}
Undoing the vectorization and plugging into \cref{eq:gen-var} of the main text yields
\begin{align}\label{eq:var-in-rho-basis-app}
&(\Delta^fA)^2:=\Tr\left[A_0 m^f(L_\rho, R_\rho) (A_0)\right]\nonumber\\
&=\Tr \bigg[\sum_{ijkl}(A_0)_{kl}\ket{k}\bra{l}(A_0)_{ij}m^f(p_i,p_j)\ket{i}\bra{j}\bigg]\nonumber\\
&=\sum_{ij} \big|(A_0)_{ij}\big|^2m^f(p_i,p_j).
\end{align}
Restricting $A$ to its coherent component $A\rightarrow A_C$ yields \cref{eq:var-in-rho-basis-C}. Similarly, considering $A\rightarrow A_I$ and using that $m^f(p_i,p_i)=p_i$ yields \cref{eq:var-in-rho-basis-I}. Essentially identical manipulations prove \cref{eq:I-in-rho-basis-C,eq:fi} starting from the definition of $\mathcal{I}^f$ in \cref{eq:gen-qfi} and using that $\dot\rho=m^f(L_\rho,R_\rho)(\mathcal{L}^f)$.

\subsection{All bounds are equivalent for a qubit}
The results of the previous section can be used to demonstrate a simple proposition, the results of which were described in the main text.
\begin{proposition}
For a qubit, $(\Delta^f A_C)\sqrt{\mathcal{I}^f_C}=(\Delta^\sld A_C)\sqrt{\mathcal{I}^\sld_C}$ for all $f\in\mathbb{F}$. 
\end{proposition}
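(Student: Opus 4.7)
My plan is to exploit the expressions for $(\Delta^f A_C)^2$ and $\mathcal{I}^f_C$ in the eigenbasis of $\rho$ that were derived in the preceding subsection, namely \cref{eq:var-in-rho-basis-C} and \cref{eq:I-in-rho-basis-C}, and observe that for a qubit each sum collapses to a single independent term.

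Concretely, for a qubit, write $\rho = p_0 \ket{0}\bra{0} + p_1 \ket{1}\bra{1}$. Because $A_0$ and $\dot\rho$ are Hermitian, their coherent parts in this basis are entirely determined by the single complex entries $(A_0)_{01}$ and $(\dot\rho)_{01}$, with $(A_0)_{10} = \overline{(A_0)_{01}}$ and $(\dot\rho)_{10} = \overline{(\dot\rho)_{01}}$. The sums over $i \neq j$ in \cref{eq:var-in-rho-basis-C,eq:I-in-rho-basis-C} each have only two terms, and by the symmetry property $m^f(p_0,p_1) = m^f(p_1,p_0)$ (established in Section S2) these collapse to
\begin{equation*}
(\Delta^f A_C)^2 = 2\,|(A_0)_{01}|^2\, m^f(p_0,p_1), \qquad \mathcal{I}^f_C = \frac{2\,|(\dot\rho)_{01}|^2}{m^f(p_0,p_1)}.
\end{equation*}

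The key step is then to form the product $(\Delta^f A_C)^2\, \mathcal{I}^f_C$, at which point the $m^f(p_0,p_1)$ factors cancel exactly, leaving $4\,|(A_0)_{01}|^2\,|(\dot\rho)_{01}|^2$, which is manifestly independent of $f$. Taking square roots and comparing to the same expression with $f = f_\sld$ yields the claimed equality. There is no real obstacle to this argument; the only thing to check carefully is the symmetry of $m^f$ in its two arguments, which is already in hand from property (iii') of Section S2. A brief remark that this collapse is special to $d=2$, and breaks down for $d\geq 3$ because one then has multiple independent off-diagonal pairs with generally distinct ratios $m^f(p_i,p_j)/m^\sld(p_i,p_j)$, motivates the qutrit example considered in the main text.
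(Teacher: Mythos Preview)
Your proposal is correct and follows essentially the same approach as the paper: both use the eigenbasis expressions \cref{eq:var-in-rho-basis-C,eq:I-in-rho-basis-C}, observe that for a qubit each sum reduces to a single pair of equal terms, and note that the $m^f(p_0,p_1)$ factors cancel in the product to give $(\Delta^f A_C)\sqrt{\mathcal{I}^f_C}=2|(A_0)_{01}||\dot\rho_{01}|$ independent of $f$. Your write-up is simply more explicit than the paper's one-line version.
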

\begin{proof}
In both \cref{eq:var-in-rho-basis-C,eq:I-in-rho-basis-C}, 
there is a single pair of equal terms for a qubit. 
Thus, working in the eigenbasis of $\rho$, $(\Delta^f A_C)\sqrt{\mathcal{I}^f_C}=2|(A_0)_{01}||\dot \rho_{01}|$ for all $f$, proving the result.
\end{proof}

\subsection{Coherent-Incoherent Bound is a Generically Tighter Upper Bound}
As stated in the main text, the coherent-incoherent bound of \cref{eq:new-upper} of the main text (alternatively, \cref{eq:new-upper-app}) is always at least as tight as the non-split bound of \cref{eq:new-bnd} of the main text,
\begin{equation}\label{eq:new-bnd-app}
|\dot a|\leq (\Delta^f A)\sqrt{\mathcal{I}^f}.
\end{equation}

To prove this, we start from \cref{eq:var-in-rho-basis-app} for the generalized quantum $f$-variance of $A$ in the eigenbasis of the state $\rho$. We then split $A_0:=A-\mathrm{Tr}(\rho A)$ into its coherent and incoherent parts and simplify the resulting expression as follows:
\begin{align}
(\Delta^fA)^2&=\sum_{ij} \big|(A_0)_{ij}\big|^2m^f(p_i,p_j)\nonumber \\
&=\sum_{ij} \Big[\big|(A_{0,I})_{ij}\big|^2 + \big|(A_{0,C})_{ij}\big|^2\Big]m^f(p_i,p_j)\nonumber \\
&= (\Delta^f A_{I})^2+(\Delta^f A_{C})^2\nonumber\\
&=  (\Delta A_{I})^2+(\Delta^f A_{C})^2,
\end{align}
where in the last line we drop the $f$ subscript for the variance of the incoherent operator $A_I$, as for diagonal operators the variance evaluates to the usual variance for all $f$.

Therefore, the ratio of the right-hand side of \cref{eq:new-bnd-app} and the right-hand side of \cref{eq:new-upper-app} can be computed to be
\begin{align}
\frac{(\Delta^f A)\sqrt{\mathcal{I}^f}}{(\Delta^f A_C)\sqrt{\mathcal{I}_C^f}+(\Delta A_I) \sqrt{\mathcal{I}_I}}
&=\sqrt{\frac{\big[(\Delta A_I)^2+(\Delta^f A_C)^2\big]\big[\mathcal{I}^f_C+\mathcal{I}_I\big]}{\Big[(\Delta^f A_C)\sqrt{\mathcal{I}_C^f}+(\Delta A_I) \sqrt{\mathcal{I}_I}\Big]^2}}\nonumber \\
&=\sqrt{1+\frac{(\Delta^fA_C)^2\mathcal{I}_I+(\Delta A_I)^2\mathcal{I}^f_C-2(\Delta^fA_C)(\Delta A_I)\sqrt{\mathcal{I}^f_C\mathcal{I}_I}}{\Big[(\Delta^f A_C)\sqrt{\mathcal{I}_C^f}+(\Delta A_I) \sqrt{\mathcal{I}_I}\Big]^2}}\nonumber\\
&=\sqrt{1+\frac{\bigg[(\Delta^fA_C)\sqrt{\mathcal{I}_I}-(\Delta A_I)\sqrt{\mathcal{I}^f_C}\bigg]^2}{\Big[(\Delta^f A_C)\sqrt{\mathcal{I}_C^f}+(\Delta A_I) \sqrt{\mathcal{I}_I}\Big]^2}}\geq 1.
\end{align}
Thus, the coherent-incoherent bound in \cref{eq:new-upper-app} is tighter than the non-split bound in \cref{eq:new-bnd-app}. The two are equal if and only if $(\Delta^fA_C)\sqrt{\mathcal{I}_I}=(\Delta A_I)\sqrt{\mathcal{I}^f_C}$. This precisely mirrors the results derived in Appendix F of Ref.~\cite{lgp2022unifying} for the special case $f=f_\sld$.

\section{Arbitrarily Large Multiplicative Improvements from Generalized Bounds}\label{app:largest-improvement}
In this section, we demonstrate the claim that arbitrarily large multiplicative improvements in the bounds are possible compared to the existing $f_\sld$-based bounds of Ref.~\cite{lgp2022unifying}. In particular, we show this by example by demonstrating that the quantity $\xi^f$ can be made arbitrarily small in the qutrit example of the main text.

It is simple to compute that for this example $v_{01}=\frac{1}{2}\Omega^2 (p_1-p_0)^2$ and $v_{02}=\frac{1}{2}\Omega^2(p_2-p_1)^2$. Then using \cref{eq:qutrit-comp} for $f=f_\rld$, the fact that $m_{ij}^\rld=2p_ip_j/(p_i+p_j)$ and $m_{ij}^\sld = (p_i+p_j)/2$, and a bit of elementary algebra we have that
\begin{align}
&\xi^\rld
=\frac{p_1+p_2}{p_2(p_0+p_1)}\left[\frac{(p_1-p_0)^2(p_0+p_1)p_2+(p_2-p_1)^2p_0(p_1+p_2)}{(p_1-p_0)^2(p_1+p_2)+(p_2-p_1)^2(p_0+p_1)}\right].
\end{align}

Consider $p_0=\epsilon^3, p_1=\epsilon, p_2=1-\epsilon-\epsilon^3$ for some small $\epsilon$. Then,
\begin{equation}
\xi^\rld= \epsilon + \mathcal{O}(\epsilon^2),
\end{equation}
proving the desired result. Such states correspond to the lower right corner near $p_2=1$ in \cref{fig:example-1} of the main text.

\section{Experimental Details}\label{app:experiment}
In this section, we provide details behind the experimental demonstration discussed in the qutrit example described in the main text. 
We use a superconducting transmon qutrit \cite{Wallraff_qutrit} to collect the data presented in the proof-of-concept measurement in Fig. \ref{fig:example-1}. The qutrit is a standard X-mon style device with a Ta ground plane on a sapphire substrate and a Al/AlO$_\text{x}$/Al junction. The qutrit is coupled to a coplanar waveguide resonator for dispersive readout of the qutrit state  \cite{Blais2021_cQED, krantz2019A}. A dedicated drive line is used to apply coherent control on the qutrit (see Ref. \cite{Huang_packaging} for further details on the measurement setup). The relevant qutrit parameters are listed in \cref{tab:exp_params}.

We initialize the qutrit in the $\ket{f}$ state using two consecutive $\pi$-pulses on the $\ket{g}-\ket{e}$ and $\ket{e}-\ket{f}$ transitions. By varying the delay time between the state initialization and the beginning of the experiment, we utilize the natural depolarization of the excited states to generate purely diagonal mixed states in the $\ket{g},\ket{e}, \ket{f}$ basis. We fit the population dynamics to a simple decay chain model given by the Bateman equation \cite{bateman_eq, f_state_decay} to verify our initialization scheme and qutrit state readout as shown in \cref{fig:bateman-populations}. This data is used to determine population values and uncertainties used in \cref{fig:example-1}.

To carry out the experiment shown in \cref{fig:example-1}, we initialize the qutrit in a mixed state via the depolarization technique described above. After the state is initialized, we estimate $\dot{X}_{ge}$ for the specified Hamiltonian, $H=i(\Omega/2)(\ket{g}\bra{e}+\ket{e}\bra{f})+h.c.$, by applying simultaneous drives for a variable amount of time on the $\ket{g}-\ket{e}$ transition with drive strength $\Omega_{ge}$ and $\ket{e}-\ket{f}$ transition with drive strength $\Omega_{ef}$ about the Pauli-$Y$ axis of the Bloch spheres associated with these subspaces. We measure $\langle X_{ge} \rangle$ by applying a $\pi/2$-pulse before qutrit state readout to map the observable onto the population of the qutrit state: $\langle X_{ge} \rangle = P_e - P_g$, where $P_{g, (e)}$ is the extracted population of the $\ket{g}$ ($\ket{e}$) qutrit level. We measure the rate of change of the populations by varying the drive time and fitting a line to the short time dynamics ($0$~ns to $16$~ns evolution time) of $P_{g,e}$. We use the population rates of change
to calculate the speed $\dot{X}_{ge} = \dot{P}_e - \dot{P}_g$ under the measurement mapping. 

We collect data points with interleaved $\ket{g}, \ket{e}, \ket{f}$ reference measurements for population extraction. For each point, we collected 20000 averages. 
We compute uncertainties of fitted $\dot{P}_{g,e}$ from the linear least-squares fit of the populations. These uncertainties are used to obtain the the error bars on $\dot{X}_{ge}$ shown in \cref{fig:example-1} of the main text.
Data available upon 
request. 

\begin{figure}
    \centering
    \includegraphics[width=0.5\columnwidth]{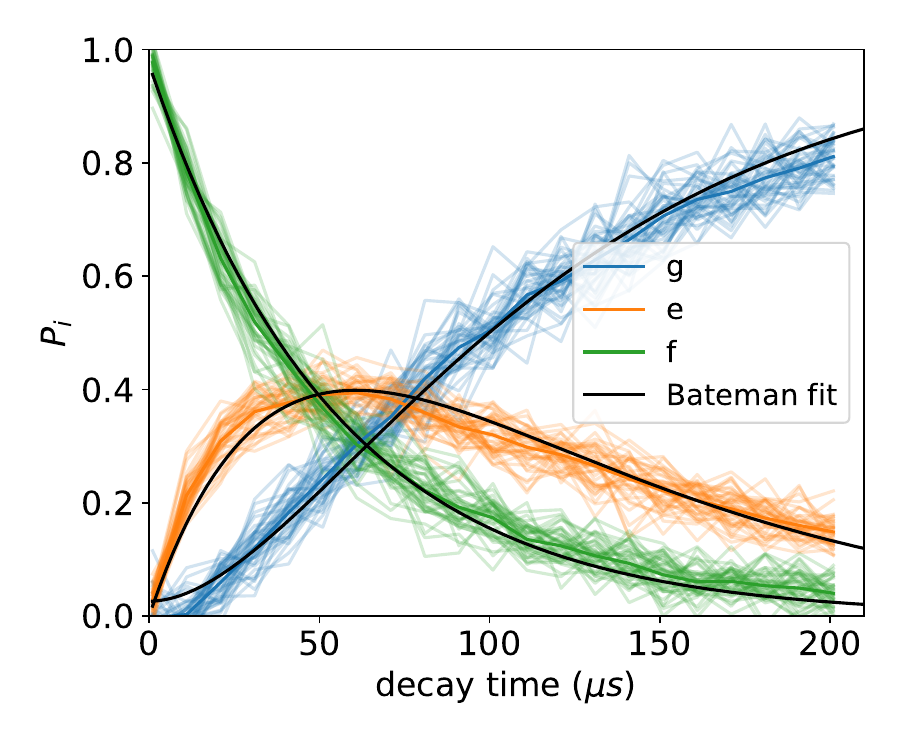}
    \caption{Repeated measurements of qutrit populations after variable decay time from the $\ket{f}$ state. Solid lines are the fit to the Bateman equations with decay rates listed in \cref{tab:exp_params}. Data is used to calibrate the state preparation technique used in \cref{fig:example-1}.
    }
    \label{fig:bateman-populations}
\end{figure}

\begin{table}[]
    \centering
    \begin{tabular}{c|c}
        Parameter & Value \\
        \hline
        $\omega_{ge}$ & $2\pi \cdot 2.78~\text{GHz}$ \\
        $\omega_{ef}$ & $2\pi \cdot 2.63~\text{GHz}$ \\
        $\Omega_{ge}$ & $2\pi \cdot 10~\text{MHz}$\\
        $\Omega_{ef}$ & $2\pi \cdot 10~\text{MHz}$ \\
        $\Gamma_{eg}$ & $14.7~\mu \text{s}^{-1}$ \\
        $\Gamma_{fe}$ & $18.4~\mu \text{s}^{-1}$ \\
    \end{tabular}
    \caption{Qutrit parameters for the device used in the experimental measurements. $\nu_{ge (ef)}$ is the transition frequency between the $\ket{g}$ and $\ket{e}$ ($\ket{e}$ and $\ket{f}$) states of the qutrit. $\Gamma_{eg (fe)}$ is the measured decay rate from the $\ket{e}$ to $\ket{g}$ ($\ket{f}$ to $\ket{e}$) state (defined as 1/$T_1$). $\Omega_{ge (ef)}$ is the Rabi rate applied to the $\ket{g}$ to $\ket{e}$ ($\ket{e}$ to $\ket{f}$) transition.}
    \label{tab:exp_params}
\end{table}

\section{Any Bound Can be Tightest}\label{app:example-details}
In this section, we provide an example demonstrating the claim in the main text that for every $f_{\beta}\in\mathbb{F}_\beta$, there exists $(\rho,\dot\rho,A)$ such that the tightest bound, optimized over this family, is $f_{\beta}$. This example is similar to the qutrit example of the main text. Let the observable of interest is $A=\ket{g}\bra{e}+\ket{e}\bra{g}$, and, as in that example, we focus on diagonal $\rho$, so that this corresponds to $A=\ket{0}\bra{1}+\ket{1}\bra{0}$.

Consider driving the system with the Hamiltonian $H=-(i 2\pi)(2.5 \,\mathrm{ MHz})(\ket{0}\bra{1}+4\ket{1}\bra{2}+\ket{0}\bra{2})+h.c.$. In \cref{fig:example-2}, we show the $\beta$ that minimizes the coherent ratio $\xi^{f_\beta}$ as a function of the eigenvalues $\{p_j\}_{j=0}^2$ of $\rho$, revealing a rich structure in the $f_\beta$ yielding the tightest bound.

\begin{figure}[hb]
\centering
\includegraphics[width=0.5\columnwidth]{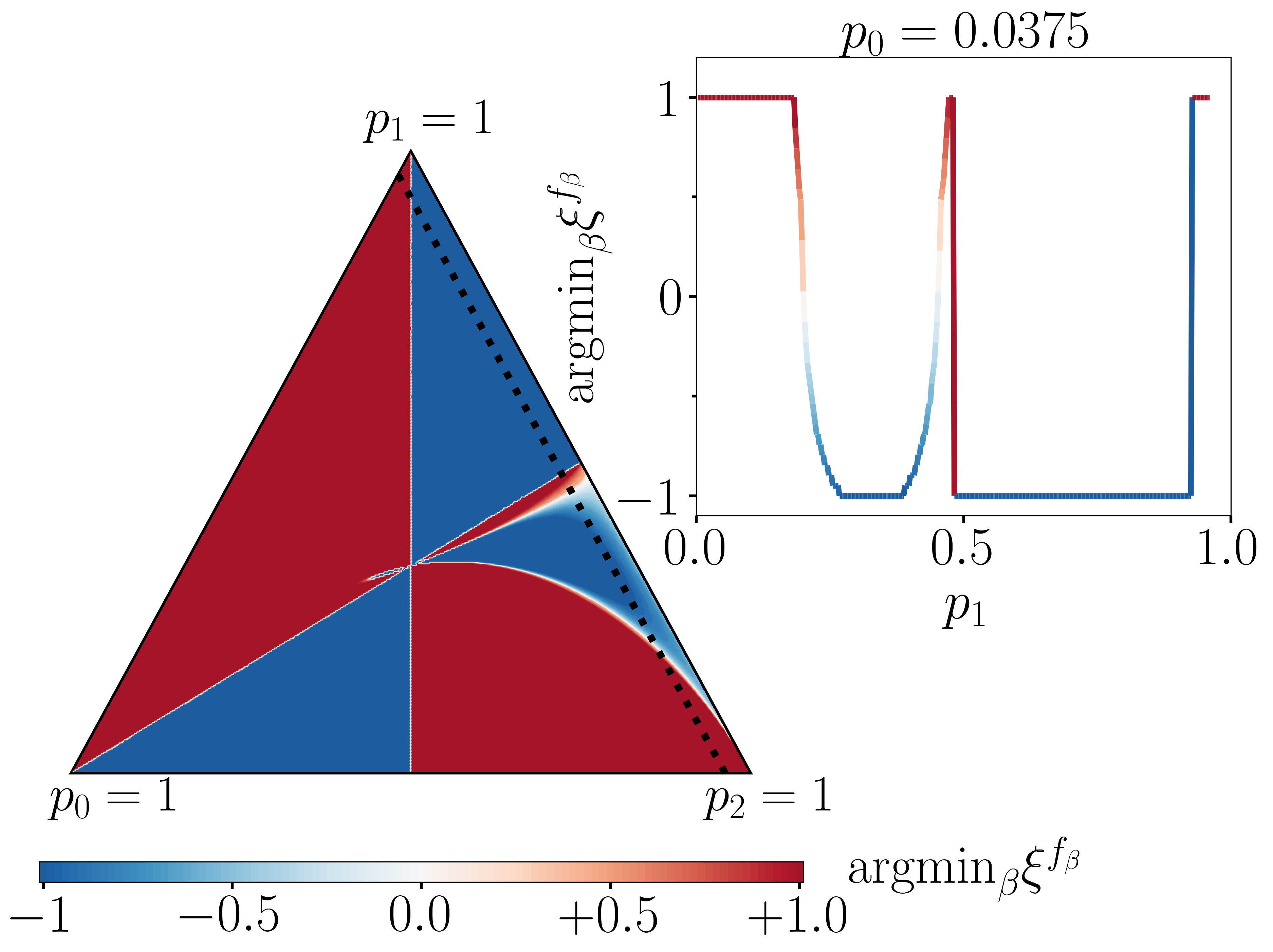}
\caption{For a qutrit, optimal $f_\beta\in\mathbb{F}_\beta$ to minimize the coherent ratio $\xi^{f_\beta}$ (\cref{eq:qutrit-comp} with $v_{01},v_{02},v_{12}\neq 0$) as a function of the eigenvalues $\{p_j\}_{j=0}^2$ of $\rho$. In the inset, the optimal $\beta$ is shown as a function of $p_1$ for fixed $p_0=0.0375$ (along dotted line in main figure).}\label{fig:example-2}
\end{figure}

\section{Saturation of Bounds}\label{app:saturation}
In this section, we provide additional details on when the bounds considered in this work are saturated. 

Starting with \cref{eq:new-bnd} of the main text, this speed limit is tight when $A_0\propto \mathcal{L}^f$. To better understand this condition, it is, again, helpful to work in the eigenbasis of $\rho$. In the super-operator vector space,
\begin{align}
\mathcal{L}^f&:=m^f(L_\rho, R_\rho)^{-1}(\dot\rho)\nonumber\\
&=\sum_{ij}\frac{1}{m^f(p_i,p_j)}\oket{ij}\obra{ij}\sum_{kl}\dot\rho_{kl}\oket{kl}\nonumber\\
&=\sum_{ij}\frac{\dot\rho_{ij}}{m^f(p_i,p_j)}\oket{ij}.
\end{align}
So \cref{eq:new-bnd} is tight if (and only if)
\begin{equation}\label{eq:A-sat-cond}
(A_0)_{ij}=\gamma\frac{\dot\rho_{ij}}{m^f(p_i,p_j)},
\end{equation}
for all $i,j$ and some constant $\gamma$. Observe the $f$-dependence of this saturation condition. That is, given some fixed observable $A$, dynamics, and a state $\rho$, the bounds are generically saturable only for a particular choice of $f$. There are special cases where the $f$-dependence is eliminated: for instance, if everything is fully incoherent, the fact that $m^f(p_i,p_i)=p_i$ for all $f$ removes the $f$ dependence. 

Clearly, if and only if $A_0\propto \mathcal{L}^f$, then also $A_C\propto\mathcal{L}^f_C$ and $A_I\propto \mathcal{L}_I$ (where we drop the $f$ subscript because it is irrelevant for incoherent terms). So, this condition also implies the saturation of the split coherent-incoherent upper bound in \cref{eq:new-upper}.

Note that, for the example in the main text (a qutrit with a fully coherent observable and only $a_{01}\neq 0$), if the bounds are saturated for any $f$, they are saturated for all $f$. In particular, $|\dot{a}|=(\Delta^f A)\sqrt{\mathcal{I}^f}=(\Delta^f A_C)\sqrt{\mathcal{I}^f_C}$ for all $f$ in this example if and only if $v_{01}\neq 0$ and $v_{12}=v_{02}=0$. This is because only $A_{01}, A_{10}$ are non-zero for this choice of observable; thus, the saturability condition in \cref{eq:A-sat-cond} is especially simple, and the $f$ dependence can be absorbed into the constant $\gamma$. 
If there is more than a single relevant pair $i,j$ in the saturability condition of \cref{eq:A-sat-cond}, then we can recover the $f$-dependence.

For the case of purely coherent dynamics driven by a Hamiltonian $H$, it is informative to rewrite \cref{eq:A-sat-cond} as
\begin{equation}\label{eq:sat-con-coherent}
A_{ij}=-i\gamma\frac{H_{ij}(p_j-p_i)}{m^f(p_i,p_j)},
\end{equation}
where we use that $\dot\rho=-i[H,\rho]$ and $H_{ij}$ are the matrix elements of $H$ in the eigenbasis of $\rho$.

\subsection{Example of Saturating the Bounds}
In the main text, we claim one can construct examples where the SLD-based bounds are loose, but one of our new generalized bounds can be tight for the appropriate choice of $f$. Here, we show such an example.

To begin, as $\gamma$ is independent of $i$ and $j$, \cref{eq:sat-con-coherent} implies the necessary condition for the saturation of the bound in \cref{eq:new-upper} for purely coherent dynamics:
\begin{equation}
\frac{H_{ij}(p_j-p_i)}{A_{ij}m^f(p_i,p_j)}=\frac{H_{kl}(p_l-p_k)}{A_{kl}m^f(p_k,p_l)}
\end{equation}
for all $i,j,k,l$ such that $A_{ij}, H_{ij}, A_{kl}, H_{kl} \neq 0$. This condition is not sufficient as \cref{eq:sat-con-coherent} must also hold for matrix elements where $A_{ij}=0$ and/or $H_{ij}=0$. Let us assume for simplicity that these conditions are satisfied trivially so that, when $H_{ij}=0$, also $A_{ij}=0$. 

Define $x:=p_i/p_j$, $y:=p_k/p_l$, and $c=H_{kl}A_{ij}/(H_{ij}A_{kl})$. Then, we have the necessary condition(s):
\begin{equation}\label{eq:b}
    cf(x)-f(y)=cyf(x)-xf(y),
\end{equation}
where $x,y>0$. Generically, $c\in\mathbb{C}$, but it is clear (as $x,y,f(x),f(y)>0$) that, unless $c\in\mathbb{R}$, \cref{eq:b} has no solutions. So we restrict our attention to the case that $c\in\mathbb{R}$.

Consider two qubits with $H=X_1 X_2$ and  let $A=Y_1X_2-X_1Y_2/2$. We have one non-trivial saturation condition in this example corresponding to an identification of the indices $i\rightarrow 00, j\rightarrow 11, k\rightarrow 01,$ and $l\rightarrow 10$. The associated $c=\frac{1}{3}$. Also, $x:=p_{00}/p_{11}$ and $y:=p_{01}/p_{10}$.

It is fairly straightforward to work out that, in this setting, with $f(t)=f_\sld(t)=(1+t)/2$, \cref{eq:b} becomes
\begin{equation}\label{eq:sldexample}
x=\frac{1+2y}{2+y}.
\end{equation}
As a concrete example, $p_{00}=0.3, p_{01}=0.4, p_{10}=0.1, p_{11}=0.2$, corresponding to $y=4$ and $x=3/2$, obeys this saturation condition.

In contrast to the first example, valid solutions to the condition for $f(t)=f_\rld(t)=2t/(1+t)$ do not correspond to those for the condition in \cref{eq:sldexample}. In particular, here, \cref{eq:b} becomes
\begin{equation}\label{eq:rldexample}
x=\frac{y^2-1\pm \sqrt{34y^2+y^4+1}}{6y}.
\end{equation}
Any solutions to this equation with $x,y>0$ correspond to states that satisfy the generalized quantum speed limit associated with $f_\rld$. For instance, $p_{00}=( \sqrt{89}-3)/20, p_{01}=0.4, p_{10}=0.1, p_{11}=(13 - \sqrt{89})/20$,  corresponding to $y=4, x=(15+3\sqrt{89})/24$, satisfies \cref{eq:rldexample}, but clearly not \cref{eq:sldexample}. Therefore, we have constructed an example where the generalized bounds can be tight when the SLD-based bounds are not.

\section{Fast Hamiltonian Example}\label{app:fastH}
In this section, we provide an example where the optimal ``fast'' driving Hamiltonians (see \cref{eq:Hfast} of the main text) derived based on the generalized quantum speed limits can yield larger $|\dot a|$ than the existing SLD-based fast driving Hamiltonians.

In particular, consider a qutrit with the observable of interest as $A=\ket{0}\bra{1}+\ket{1}\bra{2}+h.c.$. For this observable and a range of different states $\rho$, we optimize over the ``fast'' Hamiltonians $H^{f,\mathrm{fast}}_{\rho,A}$ with respect to functions $f\in\mathbb{F}_\beta$ (see \cref{eq:fbeta}) for fixed spectral norm. In \cref{fig:example-fast-H}, we plot the ratio of the speed with the optimal $H^{f,\mathrm{fast}}_{\rho,A}$ from the result of this maximization to the speed with $H^{f_\sld,\mathrm{fast}}_{\rho,A}$ as a function of the eigenvalues $p_0,p_1$ of $\rho$. Observe that there is a whole band of states where the optimal driving Hamiltonian is derived from $f\neq f_\sld$. Therefore, we have a demonstration that the generalized bounds can be used to derive faster driving Hamiltonians than the usual SLD-based bounds. Here, the driving Hamiltonians based on the generalized bounds can increase $|\dot a|$ by up to approximately twenty percent relative to the SLD-based drives.

\begin{figure}[h]
\centering
\includegraphics[width=0.5\columnwidth]{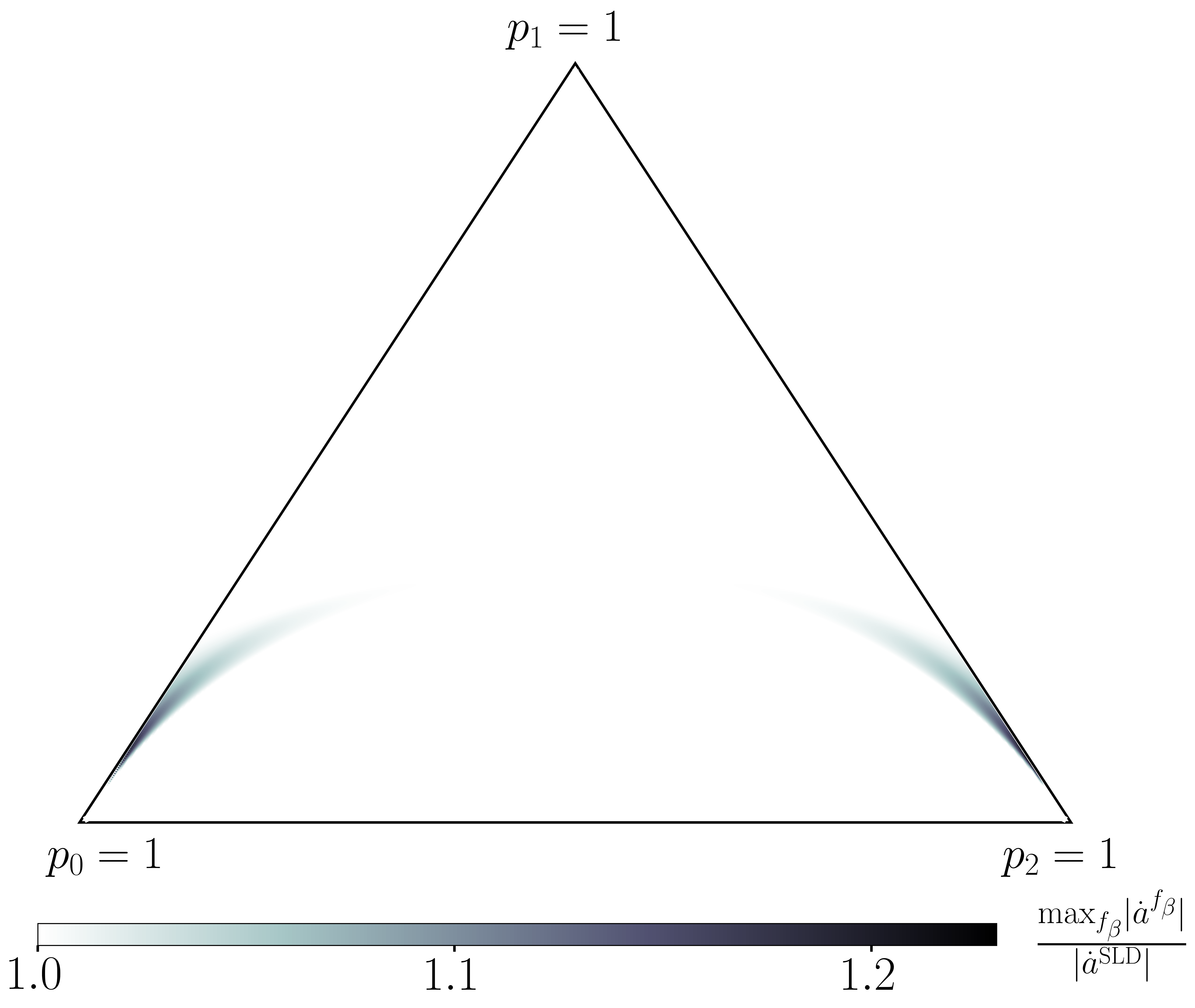}
\caption{Plot of the ratio of the speed with the optimal $H^{f,\mathrm{fast}}_{\rho,A}$ to the speed with $H^{f_\sld,\mathrm{fast}}_{\rho,A}$ as a function of the eigenvalues $p_0,p_1$ of $\rho$. Note the bands of states where $f\neq f_\sld$ yield larger $|\dot a|$. }\label{fig:example-fast-H}
\end{figure}

\section{Bounds in Terms of Energy Variance}\label{app:energy-bnds}
In this section, we show that at the cost of loosening the bounds we can obtain bounds similar to \cref{eq:new-upper} that depend on physical quantities more easily accessible and easily interpretable than the quantum Fisher informations $\mathcal{I}^f$. 

In particular, we can show that
\begin{align}\label{eq:energy-bnd}
\mathcal{I}^f_C&\leq 4\max_{p_i,p_j}\left(\frac{m^\sld(p_i,p_j)}{m^f(p_i,p_j)}\right)(\Delta H)^2 
\end{align}
where $(\Delta H)^2$ is the (usual, SLD-based) variance of the Hamiltonian $H$ driving the coherent dynamics. In the case of $f=f_\sld$ this reduces to the well-known relationship $\mathcal{I}^\sld_C\leq 4\Delta H$~\cite{Liu_2019}. 

To see \cref{eq:energy-bnd} we use $\dot\rho=-i[\rho,H]$ and \cref{eq:I-in-rho-basis-C} to write
\begin{align}
\mathcal{I}^f_C &= \sum_{i\neq j}\frac{(p_i-p_j)^2}{m^f(p_i,p_j)}|H_{ij}|^2 \nonumber \\
&\leq \sum_{i\neq j}\frac{(p_i+p_j)^2}{m^f(p_i,p_j)}|H_{ij}|^2 .
\end{align}
Then, from the definition of $m^\sld(p_i,p_j)$, which corresponds to the arithmetic mean,  $(p_i+p_j)=2m^\sld(p_i,p_j)$, so
\begin{align}
\mathcal{I}^f_C &\leq 4 \sum_{i\neq j}\frac{(m^\sld(p_i,p_j))^2}{m^f(p_i,p_j)}|H_{ij}|^2 \nonumber \\
&\leq 4 \max_{p_i,p_j}\left(\frac{m^\sld(p_i,p_j)}{m^f(p_i,p_j)}\right)\sum_{i\neq j}m^\sld(p_i,p_j)|H_{ij}|^2 \nonumber \\
&= 4 \max_{p_i,p_j}\left(\frac{m^\sld(p_i,p_j)}{m^f(p_i,p_j)}\right) (\Delta H_C)^2\nonumber \\
&\leq 4 \max_{p_i,p_j}\left(\frac{m^\sld(p_i,p_j)}{m^f(p_i,p_j)}\right)(\Delta H)^2, 
\end{align}
proving \cref{eq:energy-bnd}. In the next-to-last line we use \cref{eq:var-in-rho-basis-C}, and in the last line we use that $(\Delta H_C)^2 \leq (\Delta H)^2 $.

Even simpler bounds can be found by considering $f=f_\rld$ and using that $m^f(p_i,p_j)\geq m^\rld(p_i,p_j)$ for all $f\in\mathbb{F}$:
\begin{equation}
\mathcal{I}^f_C\leq \max_{p_i,p_j}\left(\frac{(p_i+p_j)^2}{2p_ip_j}\right)(\Delta H)^2.
\end{equation}
Now, letting $\kappa_{ij}:=p_j/p_i$,
\begin{align}
\max_{p_i,p_j}\left(\frac{(p_i+p_j)^2}{2p_ip_j}\right) \mapsto \max_{\kappa_{ij}}\frac{1}{\kappa_{ij}}(1+\kappa_{ij}^2).
\end{align}
This expression is maximized by either making $\kappa_{ij}$ as large or small as possible (note that the expression above is identical with $\kappa_{ij}\rightarrow \kappa_{ij}^{-1}$). Without loss of generality, let us take $p_j\geq p_i$. Then the maximum $\kappa_{ij}=\max_{p_i,p_j}(p_j/p_i)$ is the condition number $\kappa_\rho$ of $\rho$. Therefore,
\begin{equation}\label{eq:energy-variance-final-bnd-C}
\mathcal{I}^f_C\leq \kappa_\rho (\Delta H)^2.
\end{equation}
As a final simplification (and further loosening of the bound), we can use that $4(\Delta H)^2\leq \norm{H}_s^2$ where the seminorm is defined as the difference in the maximum and minimum eigenvalues of $H$~\cite{boixo2008generalized}. This gives us a bound that depends only on the maximum and minimum eigenvalues of $\rho$ and $H$:
\begin{equation}\label{eq:seminorm-final-bnd-C}
\mathcal{I}^f_C\leq \kappa_\rho \norm{H}^2_s.
\end{equation}

Finally, if there are incoherent dynamics and we assume the source of this non-unitary dynamics is entanglement with an environment via a Hamiltonian $H^\mathrm{int}$ that includes all terms with support on both the system and environment then, as proven in Ref.~\cite{lgp2022unifying},
\begin{equation}\label{eq:energy-variance-I}
\mathcal{I}_I\leq 4(\Delta H^\mathrm{int})^2 \leq \norm{H^\mathrm{int}}_s^2
\end{equation}
where this variance is calculated on the joint state of the system and the environment. 

Putting together \cref{eq:energy-variance-final-bnd-C,eq:seminorm-final-bnd-C,eq:energy-variance-I} gives the quantum speed limit:
\begin{equation}\label{eq:final-variance-bnd}
|\dot a|\leq \sqrt{\kappa_\rho}\Delta^f A_C\Delta H+2\Delta A_I\Delta H^\mathrm{int}.
\end{equation}
As $\Delta^f A \leq \Delta A$ for all $f$, the tightest bound of this form  will always correspond to $f=f_\rld$. Even for $f=f_\sld$, however, \cref{eq:final-variance-bnd} is guaranteed to be tighter than the existing bounds~\cite{lgp2022unifying}
\begin{equation}\label{eq:lgp-variance-bnd}
|\dot a|\leq  2\Delta A_C\Delta H+2\Delta A_I\Delta H^\mathrm{int}
\end{equation}
if $\kappa_\rho<4$. For these states the new bound in \cref{eq:energy-variance-final-bnd-C} is tighter than the bound $\mathcal{I}_C^\sld < 4 (\Delta H)^2$ used in Ref.~\cite{lgp2022unifying}.
As described in the main text, the $\kappa_\rho<4$ condition is met, for instance, for a thermal state $\rho\propto \exp(-\beta H)$ with $\beta \leq \log(4)/(2\norm{H}_s)$.

However, as stated in the main text, the range of states where \cref{eq:final-variance-bnd} is tighter than \cref{eq:lgp-variance-bnd} will be much larger (and observable dependent) than the condition $\kappa_\rho<4$ suggests. This is because the regions of parameter space where $\kappa_\rho$ correspond closely with those where $\Delta^f A_C \ll \Delta A_C$, as can be seen from \cref{eq:var-in-rho-basis-C}.

For instance, consider the example in the main text. Recall in this example we consider a qutrit with coherent dynamics driven by a Hamiltonian $\propto (i\ket{0}\bra{1}+i\ket{1}\bra{2})+h.c.$ and the observable of interest is $A=A_C=\ket{0}\bra{1}+\ket{1}\bra{0}$. In \cref{fig:example-1-loose-bnds} we plot the ratio of the new bound in \cref{eq:final-variance-bnd} (\cref{eq:variance-bnd} of the main text) with $f=f_\rld$ to the bound in \cref{eq:lgp-variance-bnd} as a function of the eigenvalues $p_0,p_1,p_2$ of $\rho$. As expected, the new bound is tighter when $\kappa_\rho<4$, but there is also an extended observable-dependent region of parameter space beyond this region where the new bound is tighter.

\begin{figure}[ht]
\centering
\includegraphics[width=0.5\columnwidth]{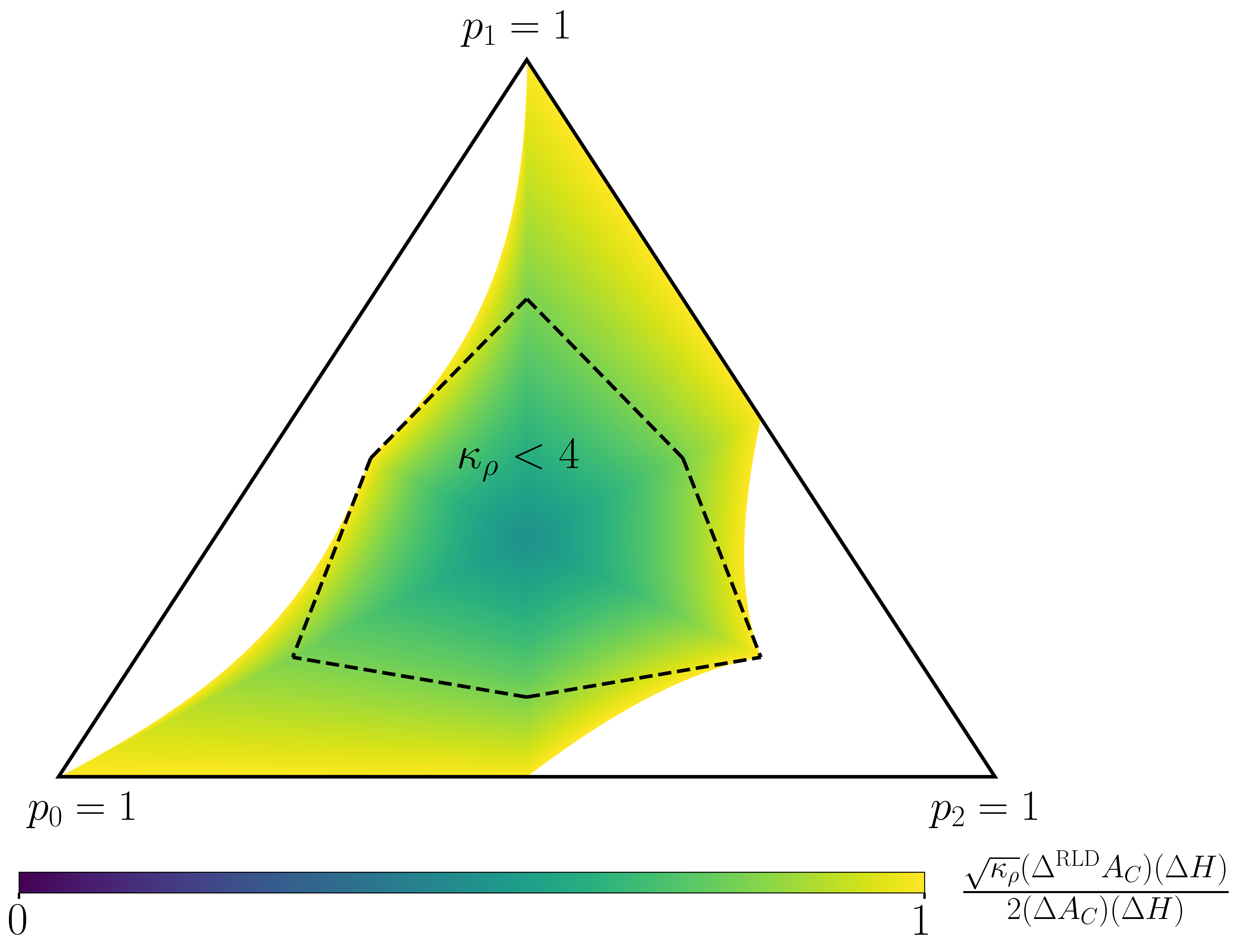}
\caption{For the qutrit example of the main text, plot of the ratio of the new bound in \cref{eq:final-variance-bnd} (\cref{eq:variance-bnd} of the main text) with $f=f_\rld$ to the bound in \cref{eq:lgp-variance-bnd} as a function of the eigenvalues $p_0,p_1,p_2$ of $\rho$, plotted in barycentric coordinates. The region where $\kappa<4$ is marked by the dashed lines. In this region, the new bound is guaranteed to be tighter for any observable $A$. However, there is also a large region of parameter space beyond this region where the new bound is still tighter.}\label{fig:example-1-loose-bnds}
\end{figure}
\end{document}